\newtheorem{dfn}{Definition}
\newtheorem{theo}{Theorem}
\newtheorem{corollary}{Corollary}
\newtheorem{lemma}{Lemma}
\DeclareMathOperator{\dH}{d_H}
\DeclareMathOperator{\wH}{w_H}
\newcommand{\ignore}[1]{}
\newcommand{\GF}{\ensuremath{\mathbb{F}_2}}
\newcommand{\sen}[2]{s_#1(#2)}
\newcommand{\avsen}[1]{s_{#1}}
\newcommand{\Expec}[1] {\ensuremath{{\mathbb E}(#1)}} 
\newcommand{\vect}[1]{\ensuremath{{\mathbf{#1}}}}
\newcommand{\vectcomp}[2]{\ensuremath{{\mathrm {#1}_{#2}}}}
\newcommand{\BN}{ {\bf B}~}
\providecommand{\EX}{\mathbb E}
\providecommand{\set}[1]{\left\{ #1 \right\}}
\providecommand{\setcard}[1]{\left\lvert #1 \right\rvert}
\providecommand{\field}[1]{\mathbb #1}
\providecommand{\indexedvect}[2]{\vect{#1}^{(#2)}}
\providecommand{\indexedvectcomp}[3]{\vect{#1}^{(#2)}_{#3}}
\providecommand{\unitvect}[1]{\indexedvect{u}{#1}}
\newenvironment{keywords}%
   {\begin{trivlist}\item[]{\bfseries\sffamily Keywords:}\ }
   {\end{trivlist}}
\begin{document}

\title{Analysis of random Boolean networks using the average sensitivity}

\author{Steffen Schober\thanks{Corresponding author. E-Mail: Steffen.Schober@uni-ulm.de}\; and Martin Bossert\\
\small Institute of Telecommunications and Applied Information Theory, Ulm University\\
\small Albert-Einstein-Allee 43, 89081 Ulm, Germany
}

\maketitle

\begin{abstract}
In this work we consider random Boolean networks that provide a general model for {\em genetic regulatory networks}.
We extend the analysis of 
James Lynch who was able to proof Kauffman's conjecture that in the {\em ordered phase} of random networks, the number of 
{\em ineffective} and {\em freezing} gates is large, where as in the {\em disordered phase} their number is small.
Lynch proved the conjecture only for networks with connectivity two and non-uniform probabilities 
for the Boolean functions. 
We show how to apply the proof to networks with arbitrary connectivity $K$ and to 
random networks with {\em biased} Boolean functions.
It turns out that in these cases Lynch's parameter $\lambda$ is equivalent to the expectation of {\em average sensitivity} of the Boolean functions
used to construct the network. Hence we can apply a known theorem for the expectation of the average sensitivity.
In order to prove the results for networks with biased functions, we deduct the expectation of the
average sensitivity when only functions with specific connectivity and specific bias are chosen at random.
\end{abstract}

\begin{keywords}
Random Boolean networks, phase transition, average sensitivity
\end{keywords}
{\small PACS numbers: 02.10.Eb, 05.45.+b, 87.10.+e} 

 
\section{Introduction}
In 1969 Stuart Kauffman started to study random Boolean networks as simple models of
genetic regulatory networks \cite{Kauffman1969}. Random Boolean networks  that consists
of a set of Boolean {\em gates} that are capable
of storing a single Boolean value. At discrete time steps these gates store a new value
according to an initially chosen random Boolean function, which receives its inputs
from random chosen gates. We will give a more formal definition later. Kauffman
made numerical studies of random networks, where the functions
are chosen from the set of all Boolean functions with $K$ arguments (the so called
{\em $NK$-Networks}).  He recognised that if $K\leq 2$, 
the random networks exhibit a remarkable form of ordered
behaviour: The limit cycles are small, the number of {\em ineffective gates},
which are gates that can be perturbed without
changing the asymptotic behaviour, and the number of {\em freezing gates} that stop changing
their state is large. In contrast if $K \geq 3$, the networks do not exhibit this kind of ordered behaviour
(see \cite{Kauffman1969,Kauffman1974}). The first analytical proof for this {\em phase transition} was 
given by Derrida and Pomeau (see \cite{Derrida1986}) by studying the evolution 
of the Hamming distance of random chosen initial states by means of so called {\em annealed approximation}.
The first proof for the number of freezing and ineffective gates was given by James Lynch (see
\cite{Lynch2005}, although slightly weaker results  appeared earlier \cite{Lynch1995,Lynch2002}).
Depending on a parameter $\lambda$, that depends 
on the probabilities of the Boolean functions, he showed that if $\lambda \leq 1$ 
almost all gates are ineffective and freezing, otherwise not.
Although his analysis is very general, until now it was only applied to networks
with connectivity $2$ and non-uniform probabilities for the Boolean function:
if the probability of choosing a constant function
is larger or equal the probability of choosing a non-constant non-canalizing function (namely
the XOR- or the inverted XOR-function),
$\lambda$ is less or equal to one.
But it turns out that in some cases $\lambda$ is equal to the expectation of the average sensitivity.
Therefore we will first study the average sensitivity in Section \ref{sec:02}.
Afterwards it will be shown in Section \ref{sec:03} how to use the results from the previous section to 
apply Lynch's analysis to classical $NK$-Networks and {\em biased} random Boolean networks \footnote{a definition will be given later}.
But first we will give some basic definition used throughout the paper in Section \ref{sec:01}.

\section{Basic Definitions}
\label{sec:01}
In the following $\GF = \set{0,1}$ denotes the Galois field of two elements,
where addition, denoted by $\oplus$, is defined modulo 2.
The set of vectors of length $K$ over $\GF$ will be denoted by $\GF^K$. If $\vect{x}$ is a vector from 
$\GF^K$, its $i$th component will be denoted by $\vectcomp{x}{i}$.
With $\unitvect{i} \in \GF^K$ we will denote the {\em unit vector} which has all 
components zero except component $i$ which is one.
The Hamming weight of $\vect{x} \in \GF^K$ is defined as 
\begin{equation*}
\wH(\vect{x}) = \setcard{\set{ i  \,|\, \vectcomp{x}{i} \neq 0,\, i=1, \dots,K}}
\end{equation*}
and the Hamming distance of 
 $\vect{x},\vect{y} \in \GF^K$ as 
\begin{equation*}
\begin{split}
\dH(\vect{x},\vect{y}) &= \wH( \vect{x} \oplus \vect{y}).
\end{split}
\end{equation*}
A Boolean function is a mapping $f:\GF^K \rightarrow \GF$. A function $f$ may be represented by its
{\em truth table} $\vect{t}_f$, that is, a vector in $\GF^{2^K}$, where each component of the truth table gives the value 
of $f$ for one of the $2^K$ possible arguments. 
To fix an order on the components of the truth table, suppose that its $i$th component 
equals the value of the corresponding function, given the binary representation (to $K$ bits) of $i$ as an argument.

\section{Average Sensitivity}
\label{sec:02}
 In this section we will focus on the {\em average sensitivity}.
The average sensitivity is a known complexity measure for 
Boolean functions, see for example \cite{Wegener1987}\footnote{here it is called {\em critical complexity}}. 
It was already used to study Boolean and random Boolean networks for example  in
\cite{Shmulevich2004,Kauffman2004}.

\begin{dfn} Let $f$ denote a Boolean function $\GF^K \rightarrow \GF$ and $\unitvect{i}$ a unit vector. 
\begin{enumerate}
\item
The sensitivity $\sen{f}{\vect{w}}$ is defined as: 
\begin{align*}
\sen{f}{\vect{w}} &= \setcard{\left\{ i \,|\, f(\vect{w}) \neq f(\vect{w} \oplus \unitvect{i}), i = 1,\dots, K\right\}}.
\end{align*}
\item 
The average sensitivity $\avsen{f}$ is defined as the average of $\sen{f}{\vect{w}}$ over all $\vect{w} \in \GF^K$:
\begin{equation*}
\avsen{f} = 2^{-K} \sum_{\vect{w} \in \GF^K} \sen{f}{\vect{w}}
\end{equation*}
\end{enumerate}
\end{dfn}

Now consider the random variable $F_K:\Omega \rightarrow {\cal F_K}$, 
where ${\cal F_K}$ denotes the set a all $2^{2^K}$ Boolean function with $K$ arguments.
The probability measure is given by $P(F_K=f)=\frac{1}{2^{2^K}}$.
The expected value of the average sensitivity of this random variable is denoted by
$ \EX_{F_K}(\avsen{f}) $, and is given by
\begin{equation*}
\EX_{F_K}(\avsen{f}) = \sum_{f} P(F_K=f) \avsen{f}
\end{equation*}
The expected value was already derived in \cite{Bernasconi1998}, and is given by:
\begin{theo}[Bernasconi \cite{Bernasconi1998}]
\hfill\\
Let the random variable $F_K$ be defined as above, then 
\begin{equation*}
\EX_{F_K}(\avsen{f}) = \sum_{f} P(F_K=f) \avsen{f} = \frac{K}{2}.
\end{equation*}
\label{theo:01}
\end{theo}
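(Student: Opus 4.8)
The plan is to unfold the definition of the average sensitivity into a double sum of indicators, push the expectation inside by linearity, and then observe that for a uniformly random truth table the two endpoints of any hypercube edge carry independent uniform bits.

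First I would write, for a fixed $f$,
\begin{equation*}
\avsen{f} = 2^{-K} \sum_{\vect{w} \in \GF^K} \sum_{i=1}^{K} \mathbf{1}\bigl[f(\vect{w}) \neq f(\vect{w} \oplus \unitvect{i})\bigr],
\end{equation*}
which is just $\sen{f}{\vect{w}}$ written as a sum of indicators over the $K$ coordinate directions. Taking the expectation over $F_K$, using linearity of expectation and the fact that the expectation of an indicator is the corresponding probability, gives
\begin{equation*}
\EX_{F_K}(\avsen{f}) = 2^{-K} \sum_{\vect{w} \in \GF^K} \sum_{i=1}^{K} P\bigl(f(\vect{w}) \neq f(\vect{w} \oplus \unitvect{i})\bigr).
\end{equation*}

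The key step is to evaluate each probability. Since $\unitvect{i} \neq \vect{0}$, the arguments $\vect{w}$ and $\vect{w} \oplus \unitvect{i}$ are distinct, so under the uniform measure on $\GF^{2^K}$ the bits $f(\vect{w})$ and $f(\vect{w} \oplus \unitvect{i})$ are independent and uniform; hence they differ with probability exactly $\tfrac{1}{2}$. (Equivalently, one may double count directly: for each of the $K 2^K$ ordered pairs $(\vect{w}, \vect{w} \oplus \unitvect{i})$ exactly $2^{2^K-1}$ of the $2^{2^K}$ functions disagree on that pair, because $f$ is free on the remaining $2^K-2$ inputs and there are two ways to assign distinct values to the pair.) Substituting $\tfrac{1}{2}$ and noting the double sum has $2^K \cdot K$ terms yields $\EX_{F_K}(\avsen{f}) = 2^{-K} \cdot 2^K K \cdot \tfrac{1}{2} = K/2$.

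There is no genuine obstacle here: the only point requiring care is that the independence (equivalently pairing) claim rests on $\vect{w} \neq \vect{w} \oplus \unitvect{i}$, which could fail only for $\unitvect{i} = \vect{0}$, impossible for a unit vector, so every term of the sum contributes $\tfrac{1}{2}$ and nothing degenerate occurs. The same answer could also be extracted from the Fourier-analytic identity expressing $\avsen{f}$ as the level-weighted Fourier weight of $f$, since the uniform distribution over functions is invariant under multiplication by any character and therefore spreads the expected Fourier weight evenly across the $2^K$ subsets of $\{1,\dots,K\}$; but that is heavier machinery than the elementary counting needs.
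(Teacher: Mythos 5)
Your proof is correct and follows essentially the route the paper itself indicates: the paper does not prove Theorem \ref{theo:01} explicitly but remarks that it follows ``in a similar way'' to Theorem \ref{theo:ExpectedSensitivity}, i.e.\ by linearity of expectation over the $K2^K$ pairs $(\vect{w},\vect{w}\oplus\unitvect{i})$ and evaluation of each term, which is exactly your indicator computation with the independence of the two distinct truth-table bits replacing the paper's combinatorial count $\binom{2^K-2}{t-2}/\binom{2^K}{t}$.
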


We will now concentrate on biased Boolean functions.
The bias of a Boolean function $f:\GF^K \rightarrow \GF$
is defined as the number of $1$ in the functions truth table divided by ${2^K}$.
To define the bias of a random Boolean function two definitions are possible.
First we can
assumes that the truth tables of the Boolean functions are produced by independent Bernoulli trials
with probability $p$ for a one (This should be called {\em mean} bias, used for example in \cite{Derrida1986,Shmulevich2004} ). 
Therefore consider the random variable $F_{K,p}$. The probability of choosing 
a function $f$ is given by
\begin{equation*}
P( F_{K,p} = f ) = p^{\wH(\vect{t}_f)} (1-p)^{2^K - \wH(\vect{t}_f)}
\end{equation*} 
For $p=1/2$ this is equivalent to the definition of $F_K$. 

As a second possibility,  we can only choose functions which have bias $p$ whereas to all other functions 
we assign probability 0 (we will call this {\em fixed} bias). 
Therefore consider the random variables $F^\text{fixed}_{K,p}: \Omega \rightarrow {\cal F_K}$.
Denote the truth table of a function $f$ by $\vect{t}_f$.
Further denote the set of all Boolean functions $f$ with $K$ arguments and $\wH(\vect{t}_f) = p 2^K$ 
with ${\cal F}_{K,p}$.
The probability for a certain function chosen according  $F^\text{fixed}_{K,p}$ is given by
\begin{equation*}
P( F^\text{fixed}_{K,p} = f ) = \begin{cases}
		\frac{1}{\setcard{{\cal F}_{K,p}}} & \text{if $f \in {\cal F}_{K,p}$} \\
		0 & \text{if $f \notin {\cal F}_{K,p}$}
	      \end{cases}
\end{equation*} 

Both definitions ensure 
that  the expectation to get a one is equal to $p$ if the input of a function is chosen 
at random (with respect to uniform distribution).
But it will turn out that these two different methods of creating biased Boolean functions, 
have a major impact on the average sensitivity.

The expectation of the average sensitivity of $F_{K,p}$ was derived in
\cite{Shmulevich2004}:
\begin{theo}[\cite{Shmulevich2004}]
Let the random variable $F_{K,p }$ be defined as above: 
\begin{equation*}
\EX_{F_{K,p}}(\avsen{f}) = 2 K p (1-p)
\end{equation*}
\label{theo:ExpectedSensitivityB}
\end{theo}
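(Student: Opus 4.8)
The plan is to expand $\avsen{f}$ as a double sum over the $K$ coordinate directions and the $2^K$ inputs, move the expectation inside these finite sums by linearity, and then reduce the whole computation to a single elementary probability.

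First I would note that, directly from the definition of $\sen{f}{\vect w}$ and linearity of expectation,
\begin{equation*}
\EX_{F_{K,p}}\bigl(\sen{f}{\vect w}\bigr) = \sum_{i=1}^{K} P\bigl[\, f(\vect w) \neq f(\vect w \oplus \unitvect i) \,\bigr],
\end{equation*}
and hence, averaging over $\vect w \in \GF^K$,
\begin{equation*}
\EX_{F_{K,p}}(\avsen{f}) = 2^{-K} \sum_{\vect w \in \GF^K} \sum_{i=1}^{K} P\bigl[\, f(\vect w) \neq f(\vect w \oplus \unitvect i) \,\bigr].
\end{equation*}

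The crucial observation is how to evaluate $P[\, f(\vect w) \neq f(\vect w \oplus \unitvect i) \,]$ for fixed $\vect w$ and $i$. Since $\unitvect i$ is a unit vector, the arguments $\vect w$ and $\vect w \oplus \unitvect i$ are distinct, so $f(\vect w)$ and $f(\vect w \oplus \unitvect i)$ are the entries of the truth table $\vect t_f$ at two different positions; by the construction of $F_{K,p}$ (independent Bernoulli trials with success probability $p$) these two entries are independent Bernoulli$(p)$ random variables. The probability that two independent Bernoulli$(p)$ variables take different values is $p(1-p) + (1-p)p = 2p(1-p)$, a quantity independent of both $\vect w$ and $i$.

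Plugging this back in, the double sum has $2^K \cdot K$ identical terms, each equal to $2p(1-p)$, the prefactor $2^{-K}$ cancels the factor $2^K$, and we obtain $\EX_{F_{K,p}}(\avsen{f}) = 2Kp(1-p)$. (Equivalently, one may view $\sum_{\vect w}\sum_i$ as a doubled sum over the edges of the Boolean hypercube on $\GF^K$, each edge being ``sensitive'' with probability $2p(1-p)$.) I do not expect any real obstacle here: the only point that needs care is observing that the pair of truth-table entries being compared always lies at distinct positions, so that the independence built into $F_{K,p}$ applies; everything else is bookkeeping. As a sanity check, setting $p = 1/2$ recovers $K/2$, consistent with Theorem~\ref{theo:01}.
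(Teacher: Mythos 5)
Your proof is correct, and it follows essentially the route the paper indicates (the paper does not write out this proof but remarks that it goes through exactly like the proof of Theorem~\ref{theo:ExpectedSensitivity}: expand $\avsen{f}$ as a sum over the $K\cdot 2^K$ pairs $(\vect{w},i)$, apply linearity of expectation, and use the independence of distinct truth-table entries). Your only cosmetic difference is that you evaluate $P[f(\vect w)\neq f(\vect w\oplus\unitvect i)]=2p(1-p)$ directly, whereas the paper's template expands $(f(\vect w)-f(\vect w\oplus\unitvect i))^2$ into three expectations $p+p-2p^2$; these are the same computation.
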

For the random variable $F^\text{fixed}_{K,p}$ we will now proof the following theorem: 
\begin{theo}
Let the random variable $F^\text{fixed}_{K,p}$ be defined as above: 
\begin{equation*}
\EX_{F^\text{fixed}_{K,p}}(\avsen{f}) = \frac{2^{K+1} K p (1-p)}{(2^K-1)}.
\end{equation*}
\label{theo:ExpectedSensitivity}
\label{cor:01}
\end{theo}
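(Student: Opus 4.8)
The plan is to reinterpret $\avsen{f}$ as a (normalised) count of \emph{bichromatic} edges of the Boolean hypercube, then use linearity of expectation together with a symmetry argument, so that everything reduces to one hypergeometric probability.

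First observe that the double sum defining $\avsen{f}$ runs over all pairs $(\vect{w}, \vect{w}\oplus\unitvect{i})$, and hence counts each edge $\{\vect{w},\vect{w}\oplus\unitvect{i}\}$ of the hypercube graph $Q_K$ on $\GF^K$ exactly twice --- once from each endpoint --- contributing $2$ precisely when $f$ disagrees on its two endpoints. Writing $E(Q_K)$ for the edge set, which has $\setcard{E(Q_K)} = K 2^{K-1}$ elements, and calling an edge \emph{bichromatic} under $f$ when $f$ takes different values on its endpoints, we get
\begin{equation*}
\avsen{f} = 2^{1-K}\,\setcard{\set{ e \in E(Q_K) : e \text{ bichromatic under } f }}.
\end{equation*}
Taking expectation over $F^\text{fixed}_{K,p}$ and applying linearity gives $\EX_{F^\text{fixed}_{K,p}}(\avsen{f}) = 2^{1-K}\sum_{e \in E(Q_K)} P(e\ \text{bichromatic})$. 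Since the uniform distribution on $\mathcal{F}_{K,p}$ depends on a function only through the Hamming weight of its truth table, it is invariant under arbitrary permutations of the $2^K$ truth-table positions; hence $P(e\ \text{bichromatic})$ is the same number $q$ for every edge $e$, and $\EX_{F^\text{fixed}_{K,p}}(\avsen{f}) = 2^{1-K}\,\setcard{E(Q_K)}\,q = K q$.

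It then remains to compute $q$. Choosing $f$ according to $F^\text{fixed}_{K,p}$ is the same as choosing uniformly at random an $m$-subset of the $n := 2^K$ truth-table positions, where $m := p 2^K$ is the set of positions carrying a $1$. A fixed edge is bichromatic iff exactly one of its two endpoints lies in this subset, which by a standard hypergeometric count happens with probability $q = 2\binom{n-2}{m-1}\big/\binom{n}{m} = \tfrac{2m(n-m)}{n(n-1)}$. Substituting $n = 2^K$, $m = p2^K$ and $n-m = (1-p)2^K$ yields $q = \tfrac{2p(1-p)2^K}{2^K-1}$, whence $\EX_{F^\text{fixed}_{K,p}}(\avsen{f}) = Kq = \tfrac{2^{K+1}K p (1-p)}{2^K-1}$, as claimed.

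There is no genuine obstacle here; the points needing care are the factor $2$ in the edge reformulation (each edge is counted at both endpoints), the implicit assumption that $p2^K$ is an integer so that $\mathcal{F}_{K,p}$ is nonempty, and stating the symmetry step cleanly. As a consistency check, note that $F_{K,p}$ conditioned on $\wH(\vect{t}_f) = m$ is exactly $F^\text{fixed}_{K,m/2^K}$, so averaging the above formula against the binomial weights $\binom{2^K}{m}p^m(1-p)^{2^K-m}$ should reproduce $\EX_{F_{K,p}}(\avsen{f}) = 2Kp(1-p)$ of Theorem~\ref{theo:ExpectedSensitivityB}.
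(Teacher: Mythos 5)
Your proof is correct and takes essentially the same route as the paper's: both arguments use linearity of expectation plus the permutation symmetry of the uniform distribution on fixed-weight truth tables to reduce everything to the probability that $f$ disagrees on the two endpoints of a single hypercube edge, and that probability is in both cases the same hypergeometric count (you compute $P(\text{exactly one endpoint is }1)$ directly, while the paper arithmetizes $(f(\vect{w})-f(\vect{w}\oplus\unitvect{i}))^2$ and assembles the answer from $\EX(f(\vect{w}))$ and $\EX(f(\vect{w})f(\vect{w}\oplus\unitvect{i}))$, which is the identical calculation). Your bichromatic-edge packaging and the closing consistency check against Theorem~\ref{theo:ExpectedSensitivityB} are welcome refinements, but not a different proof.
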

\begin{proof}
To find  $\EX_{F^\text{fixed}_{K,p}}(\avsen{f})$ we will first consider the random variable $F_{K,t}:\Omega \rightarrow {\cal F_K}$
where $t \in \{0,1, \cdots, 2^K \}$
and the probability of a function is given by
\begin{equation*}
P( F_{K,t} = f ) = \begin{cases}
		\frac{1}{\binom{2^K}{t}} & \text{if}~ \wH(\vect{t}_f)=t \\
		0 & \text{else}
	      \end{cases}.
\end{equation*} 

Consider the Boolean functions as functions into $\field{R}$ by identifying $0,1 \in \GF$ with $0,1 \in \field{R}$.
Then we get or the function $f$:
\begin{align*}
\avsen{f} 	&= 2^{-K}\sum_{\vect{w} \in \GF^K} \setcard{ \set{ i \,|\, f(\vect{w}) \neq f(\unitvect{i} \oplus \vect{w}),\, i=1,\dots, K }} \\ 
 	  	&= 2^{-K}\sum _{\vect{w} \in \GF^K} \sum_{i=1}^K ( f(\vect{w}) - f(\vect{w} \oplus \unitvect{i}) )^2 \\
		&= 2^{-K}\sum _{\vect{w} \in \GF^K} \sum_{i=1}^K ( f(\vect{w}) + f(\vect{w} \oplus \unitvect{i}) - 2 f(\vect{w})f(\vect{w} \oplus \unitvect{i})).\\
\end{align*}
where $\unitvect{i}$ again denotes the unit vector with $i$th component set to $1$. 
Hence by the linearity of the expectation
\begin{equation}
\begin{split}
\EX_{F_{K,t}}(\avsen{f}) = 2^{-K} \sum_{\vect{w} \in  \GF^K} \sum_{i=1}^K &\left( \EX_{F_{K,t}}(f(\vect{w})) + \EX_{F_{K,t}}(f(\vect{w} \oplus \unitvect{i})) \right. \\ 
		  & - \left.2 \EX_{F_{K,t}}( f(\vect{w}) f(\vect{w} \oplus \unitvect{i}) ) \right).
\end{split}
\label{eq:proof00}
\end{equation}

Now we form a matrix with the truth tables of all
functions with Hamming weight $t$ as column vectors: 
\begin{equation*}
M = \begin{pmatrix}\indexedvect{c}{1}, & \indexedvect{c}{2}, \cdots ,& \indexedvect{c}{\binom{2^k}{t}} \end{pmatrix} \text{where $\indexedvect{c}{i} \in \GF^{2^n}$}
\end{equation*}
$M$ has exactly $\binom{2^K}{t}$ columns and $2^K$ rows. Each entry $M_{i,j}$ in the $i$th 
row and $j$th column equals the value of function $f_j$ given the
binary representation of $i$ as input.

Hence $\EX_{F_{K,t}}(f(\vect{w}))$ is determined by the number of $1$ in the row associated with $\vect{w}$ 
divided by the length of the row.
Consider an arbitrary row $i$. This row has a one at position $j$ if the corresponding column $\indexedvect{c}{j}$ 
has a one at position $i$. But there are $\binom{2^K-1}{t-1}$ column vectors with a $1$ at position $i$. 
It follows:
\begin{equation}
\forall \vect{w} \in \GF^K:\quad \EX_{F_{K,t}}(f(\vect{w})) = \frac{\binom{2^K -1}{t -1}}{\binom{2^K}{t}} 
				    = \frac{t}{2^{K}}.
\label{eq:proof01}
\end{equation}
As this holds for all $\vect{w}$, we have 
\begin{equation}
\forall \vect{w},\unitvect{i} \in \GF^K:\quad \EX_{F_{K,t}}( f(\vect{w} \oplus \unitvect{i})) = \frac{t}{2^{K}}.
\label{eq:proof02}
\end{equation}

To find an expression for $\EX_{F^\text{fixed}_{K,p}}( f(\vect{w}) f(\vect{w} \oplus \unitvect{i}) )$
we consider two arbitrary rows   $l,m$ ($l \neq m$). 
Define the following sum:
\begin{equation*}
\gamma_{l,m} = \sum_{i=1}^{\binom{K}{t}} M_{l,i} M_{m,i}.
\end{equation*}
Obviously $M_{l,i} M_{m,i} = 1$ only if we have a $1$ in both rows at position $i$.
This means for the column vectors $\indexedvect{c}{i}$ of $M$, 
we have $\indexedvectcomp{c}{i}{l} = \indexedvectcomp{c}{i}{m} = 1$. But there 
are exactly $\binom{2^K-2}{t-2}$ such column vectors in $M$. 
Therefore we have 
\begin{equation*}
\forall l,m, l \neq m: \gamma_{l,m} = \binom{2^K-2}{t-2}.
\end{equation*} 
As $\vect{w} \neq \vect{w} \oplus \unitvect{i}$ for all $\vect{w},\unitvect{i}$ it follows:
\begin{equation}
\EX_{F_{K,t}}( f(\vect{w}) f(\vect{w} \oplus \unitvect{i}) )= \frac{ \binom{2^K-2}{t-2} }{\binom{2^K}{t}} 
								  = \frac{t(t-1)}{2^K(2^K-1)}.
\label{eq:proof03}
\end{equation}
Hence substituting Equations \eqref{eq:proof01}, \eqref{eq:proof02} and \eqref{eq:proof03} into Equation \eqref{eq:proof00} 
leads to
\begin{equation*}
\EX_{F_{K,t}}(\avsen{f}) = 
\frac{K (2^K -t) t}{2^{K-1} (2^K -1)}.
\end{equation*}
Finally the claimed expression for $\EX_{F^\text{fixed}_{K,p}}(\avsen{f})$ can be obtained from the above equation
by a substitution of $t$: $t\rightarrow p 2^K$.
\end{proof}
\label{comment}
It should be noted, that the Theorems \ref{theo:01} and \ref{theo:ExpectedSensitivityB} can be proved using in a similar way.
Also worth noting is the fact, that if the functions are chosen according $F_K,F^\text{fixed}_{K,p}$ or $F_{K,p}$ 
the expectation of the sensitivity of a fixed vector $\vect{w}$ (namely the expectation of $s_f(\vect{w})$) is independent of $\vect{w}$
(see Equation \eqref{eq:proof00},\eqref{eq:proof01}, \eqref{eq:proof02} and \eqref{eq:proof03}). Hence the following lemma 
holds
\begin{lemma}
If $F=F_K,F^\text{fixed}_{K,p}$ or $F_{K,p}$, then
\begin{equation*}
\forall \vect{w},\vect{v} \in \GF^K: \EX_{F}(s_f(\vect{w})) =  \EX_{F}(s_f(\vect{v}))
\end{equation*}
\label{lemma:01}
\end{lemma}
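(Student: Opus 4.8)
The plan is to deduce the lemma from a single symmetry shared by all three ensembles, rather than from the explicit formulas. The key observation is that for $F=F_K$, $F^\text{fixed}_{K,p}$ and $F_{K,p}$ the probability $P(F=f)$ depends on $f$ only through the Hamming weight $\wH(\vect{t}_f)$ of its truth table: it is the constant $2^{-2^K}$ for $F_K$, it equals $p^{\wH(\vect{t}_f)}(1-p)^{2^K-\wH(\vect{t}_f)}$ for $F_{K,p}$, and it equals $1/\setcard{{\cal F}_{K,p}}$ on ${\cal F}_{K,p}$ and $0$ otherwise for $F^\text{fixed}_{K,p}$. Consequently each of these measures is invariant under any permutation of the $2^K$ coordinates of the truth table, since permuting entries leaves $\wH(\vect{t}_f)$ unchanged.

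First I would fix $\vect{w},\vect{v}\in\GF^K$, set $\vect{a}=\vect{w}\oplus\vect{v}$, and introduce the translation operator $T_{\vect{a}}:{\cal F_K}\to{\cal F_K}$ defined by $(T_{\vect{a}}f)(\vect{x})=f(\vect{x}\oplus\vect{a})$. Since $\vect{x}\mapsto\vect{x}\oplus\vect{a}$ is a bijection of $\GF^K$, the truth table of $T_{\vect{a}}f$ is obtained from that of $f$ by a permutation of its coordinates, so $\wH(\vect{t}_{T_{\vect{a}}f})=\wH(\vect{t}_f)$; hence, by the previous paragraph, $T_{\vect{a}}$ is a measure-preserving involution for each of the three distributions. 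Moreover, since $T_{\vect{a}}$ sends the $\unitvect{i}$-neighbour $\vect{w}\oplus\unitvect{i}$ of $\vect{w}$ to $\vect{w}\oplus\vect{a}\oplus\unitvect{i}$ and $\vect{w}\oplus\vect{a}=\vect{v}$, one checks directly that
\begin{equation*}
\sen{T_{\vect{a}}f}{\vect{w}}=\setcard{\left\{ i \,|\, f(\vect{w}\oplus\vect{a}) \neq f(\vect{w}\oplus\vect{a}\oplus\unitvect{i}),\ i=1,\dots,K\right\}}=\sen{f}{\vect{v}}.
\end{equation*}

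Next I would perform the change of variables $g=T_{\vect{a}}f$ in the defining sum for the expectation; as $f$ ranges over ${\cal F_K}$ so does $g$, and $f=T_{\vect{a}}g$ because $T_{\vect{a}}$ is an involution. Using $P(F=T_{\vect{a}}g)=P(F=g)$ and $\sen{T_{\vect{a}}g}{\vect{w}}=\sen{g}{\vect{v}}$ this yields
\begin{equation*}
\EX_F(\sen{f}{\vect{w}})=\sum_{f}P(F=f)\,\sen{f}{\vect{w}}=\sum_{g}P(F=g)\,\sen{g}{\vect{v}}=\EX_F(\sen{f}{\vect{v}}),
\end{equation*}
which is exactly the assertion of the lemma.

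There is no serious obstacle here: the only step that needs any care is the measure-preservation claim for $T_{\vect{a}}$, which is precisely where the weight-only dependence of the three ensembles is used. I should also record the alternative route suggested by the remark preceding the lemma, which avoids the symmetry argument entirely. Equations \eqref{eq:proof01}, \eqref{eq:proof02} and \eqref{eq:proof03} already establish that for the auxiliary variable $F_{K,t}$ the quantities $\EX_{F_{K,t}}(f(\vect{w}))$, $\EX_{F_{K,t}}(f(\vect{w}\oplus\unitvect{i}))$ and $\EX_{F_{K,t}}(f(\vect{w})f(\vect{w}\oplus\unitvect{i}))$ do not depend on $\vect{w}$ or $\unitvect{i}$, so by \eqref{eq:proof00} neither does $\EX_{F_{K,t}}(\sen{f}{\vect{w}})$; since $F^\text{fixed}_{K,p}=F_{K,p2^K}$, and both $F_K$ and $F_{K,p}$ are convex combinations of the $F_{K,t}$ (with weights $\binom{2^K}{t}2^{-2^K}$ and $\binom{2^K}{t}p^t(1-p)^{2^K-t}$ respectively), linearity of expectation transfers the statement to all three distributions.
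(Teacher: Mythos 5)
Your proof is correct, and your primary argument takes a genuinely different route from the paper's. The paper does not give a self-contained proof at all: it simply points back to Equations \eqref{eq:proof00}--\eqref{eq:proof03} in the proof of Theorem \ref{theo:ExpectedSensitivity}, where the quantities $\EX_{F_{K,t}}(f(\vect{w}))$, $\EX_{F_{K,t}}(f(\vect{w}\oplus\unitvect{i}))$ and $\EX_{F_{K,t}}(f(\vect{w})f(\vect{w}\oplus\unitvect{i}))$ are computed explicitly and seen to be independent of $\vect{w}$ and $i$ --- exactly the ``alternative route'' you record at the end, including the reduction of $F_K$ and $F_{K,p}$ to convex combinations of the $F_{K,t}$ (a step the paper leaves implicit). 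Your main argument, by contrast, is a pure symmetry argument: the translation $T_{\vect{a}}$ permutes truth-table coordinates, hence preserves $\wH(\vect{t}_f)$, hence preserves each of the three measures, and it carries $\sen{f}{\vect{v}}$ to $\sen{T_{\vect{a}}f}{\vect{w}}$; a change of variables in the expectation then finishes the proof. All the individual steps check out ($T_{\vect{a}}$ is a measure-preserving involution, $\sen{T_{\vect{a}}f}{\vect{w}}=\sen{f}{\vect{v}}$, and the reindexed sum is legitimate since $T_{\vect{a}}$ is a bijection of ${\cal F}_K$). What your approach buys is generality and independence from the earlier computation: it shows the conclusion holds for \emph{any} distribution on ${\cal F}_K$ that is invariant under translations of the input (in particular any distribution depending only on $\wH(\vect{t}_f)$), without needing the exact values of the first and second moments. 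What the paper's approach buys is economy: the needed facts are already on the page as a by-product of computing $\EX_{F^\text{fixed}_{K,p}}(\avsen{f})$, so no new argument is required.
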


Before proceeding to the next section, it should be noted, that using the same arguments as 
in the proof of Theorem \ref{theo:ExpectedSensitivity}, we can also prove the expectation of average
sensitivity of order $l$ , defined as
\begin{equation*}
s^{(l)}(f) = 2^{-K} \sum_{\vect{w} \in \GF^K} \setcard{ \set{ \vect{x} \in \GF^K | \wH(\vect{x}) = l \,\text{and}\, f(\vect{w}) \neq f(\vect{w} \oplus \vect{x}) }}.
\end{equation*}
In this case,  instead of summing up all unit vectors in Equation \eqref{eq:proof00}, we sum up all 
vectors of Hamming weight $l$. As the equations \eqref{eq:proof01} and \eqref{eq:proof03} hold for all $\vect{w} \in \GF^K$
we conclude that
\begin{equation*}
{\mathbb E}( s^{(l)}(F^\text{fixed}_{K,p}) ) = \binom{K}{l} \frac{2^{K+1} p (1-p)}{(2^K-1)}
\end{equation*}
and by similar arguments 
\begin{equation*}
{\mathbb E}( s^{(l)}(F_{K,p}) ) = \binom{K}{l} 2 p (1-p)
\end{equation*}
respectively 
\begin{equation*}
{\mathbb E}( s^{(l)}(F_{K}) ) =\frac{1}{2} \binom{K}{l}.
 \end{equation*}

\section{ Extending Lynch's analysis}
\label{sec:03}
As already mentioned James Lynch gave a very general analysis of 
randomly constructed Boolean networks (see \cite{Lynch2005}). Before stating his results we give a formal definition
for Boolean networks 
A Boolean network \BN is a 4-tuple $(V,E,\tilde{F},\vect{x})$ where 
$V = \{1,...,N\}$ is a set of natural numbers,  
$E$ is a set of labeled edges on $V$, 
$\tilde{F}=\{f_1, ..., f_N\}$ is a ordered set of Boolean functions such that
for each $v \in V$ the number of arguments of $f_v$ is the \emph{in-degree} of $v$ in $E$,
these edges are labeled with $1,... , \text{\emph{in-degree}}(v)$,
and $\vect{x} = ( \vectcomp{x}{1}, \dots \vectcomp{x}{n} ) \in \GF^N$. 
Suppose that a vertex $i$ has $K_i$ in-edges from vertices $ v_{i,1},\dots, v_{i,K_i}$.  
For $\vect{y} \in \GF^N$ we define 
\begin{equation*}
{\bf B}(\vect{y}) = \left( f_1( \vectcomp{y}{v_{1,1}}, \dots, \vectcomp{y}{v_{1,K_1}} ), \dots, f_N( \vectcomp{y}{v_{N,1}}, \dots, \vectcomp{y}{v_{N,K_N}} )\right).
\end{equation*} 
The state of \BN at time 0 is called the {\em initial state} $\vect{x}$, so we define $\vect{B}^0(\vect{x}) = \vect{x}$.
For time $t \geq 1$ the state is inductively defined as $\vect{B}^{t}(\vect{x}) = \vect{B}( \vect{B}^{t-1}(\vect{x}))$.
Hence we can in interpret $V$ as set of gates, $E$ and $\tilde{F}$ describes their functional dependence and 
$\vect{x}$ is the networks initial state. 

Assume some ordering $f_1, f_2, ...$ 
on the set of all Boolean functions $\cal F$, where each function $f_i$ depends on $K_i$ arguments.
Further
a random variable $F:\Omega \rightarrow {\cal F}$ with 
probabilities $p_i = P(F = f_i)$ 
such that $\sum_{i=i}^\infty p_i = 1$ and $\sum_{i=1}^\infty p_i K_i^2 < \infty$.
Now a random Boolean network consisting of $N$ {\em gates} is constructed as follows: For each gate
a Boolean function is chosen independently, where the probability of choosing $f_i$ is given by $p_i$. 
Suppose a function $f$ was chosen that has $K$ arguments, these arguments are chosen at random from all
$\binom{N}{K}$ equally likely possibilities. At last an initial state is chosen at random from 
the set on all equally likely states.
If the Boolean functions are chosen according to our previously defined random variable $F_K$ we
will call this networks $NK$-Networks with connectivity $K$. If the functions are chosen according to $F^\text{fixed}_{K,p}$ or $F_{K,p}$ we
will call this networks {\em biased random Boolean networks} with connectivity $K$ and fixed bias $p$ respectively mean bias $p$.

Let us now state Lynch's results. His analysis depends on a parameter $\field{R} \ni \lambda \geq 0$ depending only 
on the functions and their probabilities. We will 
define $\lambda$ later in Definition \ref{def:lambda}.
First we have to state Lynch's definition of {\em freezing} and {\em ineffective} gates:
\begin{dfn}[Lynch \cite{Lynch2005} Definition 1 Item 2 and 5]\hfill\\
Let $\vect{x} \in \GF^N$ and $v \in V$.
\begin{enumerate}
\item 
Gate $v$ freezes to $\vect{y} \in \GF^N$ in $t$ steps on input $\vect{x}$ if $\vect{B}_v^{t^\prime}(\vect{x}) = \vect{y}$ 
for all $t^\prime \geq t$. 
\item
Let $\unitvect{i} \in \GF^n$.\\
A gate $v$ is $t$-ineffective at input $\vect{x} \in \GF^K$ if $\vect{B}^t(\vect{x}) = \vect{B}^t(\vect{x} \oplus \unitvect{v})$.
\end{enumerate} 
\end{dfn}
Now we will state the main result.
\begin{theo}[Lynch \cite{Lynch2005} Theorem 4 and 6]\hfill\\
Let $\alpha$, $\beta$ be positive constants satisfying $2 \alpha \log \delta + 2 \beta < 1$ and  
$\alpha \log \delta < \beta$ where $\delta = {\mathbb E}(K_i)$.
\begin{enumerate}
\item  There is a constant $r$ such that for all $\vect{x} \in \GF^N$ 
\begin{equation*}
\lim_{n \rightarrow \infty} P(\text{$v$ is ineffective in $\alpha \log$ N steps}) = r
\end{equation*}
When $\lambda \leq 1$, $r=1$ and when $\lambda > 1$ , $r<1$. 
\item There is a constant $r$ such that for all $\vect{x} \in \GF^N$ 
\begin{equation*}
\lim_{n \rightarrow \infty} P(\text{$v$ is freezing in $\alpha \log$ N steps}) = r
\end{equation*}
When $\lambda \leq 1$, $r=1$ and when $\lambda > 1$ , $r<1$. 
\footnote{Please note that we here state a slightly weaker result than in the original analysis.}
\end{enumerate} 
\end{theo}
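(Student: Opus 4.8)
The plan is to reduce both statements to the extinction/survival dichotomy of a Galton--Watson branching process whose mean offspring number is exactly Lynch's parameter $\lambda$. First I would make precise the \emph{backward cone of influence} of a gate $v$ after $t$ steps on input $\vect{x}$: starting from the pair $(v,t)$, an in-edge $(u,v)$ is \emph{retained} if the coordinate it feeds is sensitive for $f_v$ on the partial input actually produced by the dynamics (flipping that coordinate changes $f_v$), and one recurses on $(u,t-1)$. Gate $v$ is $t$-ineffective at $\vect{x}$ as soon as this backward exploration along sensitive edges fails to reach the coordinate $v$ at time $0$; more robustly, it suffices that the whole backward cone has died out before level $0$. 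Thus the event in item~1 is, up to a negligible error, the event that a certain exploration process starting from $v$ does not survive $\alpha\log N$ generations.

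The key quantitative input is that, in the random-network construction, the expected number of sensitive in-edges of a gate is $\lambda$, and — this is the bridge announced in the introduction — in the models of interest $\lambda = \EX_F(\avsen{f})$, so Theorems~\ref{theo:01}, \ref{theo:ExpectedSensitivityB} and \ref{theo:ExpectedSensitivity} feed directly into the branching parameter. I would then establish a \emph{local tree approximation}: because the in-arguments of each gate are chosen uniformly among all $\binom{N}{K}$ possibilities, the backward cone explored from a fixed gate has size of order $\delta^{\alpha\log N}=N^{\alpha\log\delta}$ in expectation, and a first- and second-moment (birthday-type) union bound shows that with probability $1-o(1)$ no explored back-edge closes a cycle, precisely in the regime $2\alpha\log\delta+2\beta<1$; the auxiliary constraint $\alpha\log\delta<\beta$ guarantees that the explored cone stays, with high probability, within the scale $N^\beta$ on which this estimate is uniform. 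On the resulting tree the exploration is literally a Galton--Watson process with offspring mean $\lambda$. Hence: if $\lambda\le1$ it is (sub)critical and dies out almost surely, so the probability that it ever reaches level $0$ tends to $0$ and $P(v\text{ ineffective in }\alpha\log N\text{ steps})\to1$, i.e. $r=1$; if $\lambda>1$ it survives with probability $q>0$ and, conditioned on survival, reaches level $0$ with positive probability, so the limit exists and equals some $r<1$.

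For the freezing statement (item~2) I would run the analogous argument but track the set of gates that are \emph{not yet frozen}: after a bounded transient only a bounded core of perpetually oscillating gates remains, and $v$ is frozen by time $t$ exactly when no oscillation from that core propagates to $v$ within $t$ steps — again governed by the same sensitive-edge branching process, now read forward. Subcriticality $\lambda\le1$ confines the oscillating influence, giving $r=1$; supercriticality leaves a residual $r<1$. The main obstacle, I expect, is not the branching heuristic but making the tree approximation rigorous and \emph{uniform in} $\vect{x}$: one must argue that the sensitivity pattern of $f_v$ along the explored edges behaves, in distribution, like a fresh sample from $F$ no matter which partial assignment the trajectory has created, and that the low-probability bad events — cycles in the cone, atypically large generations coming from the tail of the degree distribution — can be absorbed. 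This is exactly where the hypothesis $\sum_i p_i K_i^2<\infty$ and the coupled inequalities on $\alpha,\beta$ are used, and it is the reason the clean version stated here is slightly weaker than Lynch's original formulation.
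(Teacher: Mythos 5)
First, note that the paper does not prove this theorem at all: it is quoted verbatim from Lynch \cite{Lynch2005} (Theorems 4 and 6 there), and the paper's own contribution begins only afterwards, with the identification of $\lambda$ with $\EX_{F}(\avsen{f})$. So there is no in-paper proof to compare your sketch against; what you have written is an outline of Lynch's argument, and at that level it does capture the right global strategy --- a local tree approximation of the random digraph combined with the extinction/survival dichotomy of a branching process with offspring mean $\lambda$, with the constraints on $\alpha,\beta$ controlling the size of the explored cone and the birthday-type collision bound.

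There are, however, two concrete gaps. (i) Your reduction of item 1 is wrong as stated: $v$ is $t$-ineffective at $\vect{x}$ iff $\vect{B}^t(\vect{x})=\vect{B}^t(\vect{x}\oplus\unitvect{v})$, i.e.\ iff \emph{no} gate at time $t$ is affected by the flip of gate $v$ at time $0$. This is the extinction of the \emph{forward} process of gates affected by $v$ (exactly as in item 1 of the subsequent Corollary), not the failure of a single backward exploration started at $(v,t)$ to reach $(v,0)$ --- the latter only controls whether $v$ itself is affected at time $t$. The backward process of gates that affect $v$ is the one relevant to freezing (item 2 of that Corollary); you have the two dualities interchanged, and for freezing your ``bounded core of perpetually oscillating gates'' claim is unsupported and is not how the subcritical case is closed. (ii) More importantly, you never account for the constant $a$ in Definition \ref{def:lambda}: $\lambda$ is the expected number of sensitive arguments when the input to a gate is drawn from the product measure with bias $a$, where $a$ is the fixed point of the iterated bias map. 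Showing that the states feeding a gate deep in the explored cone are asymptotically independent Bernoulli with parameter $a$ is a substantial part of Lynch's proof and is precisely what identifies the offspring mean of the branching process as $\lambda$ rather than some time-dependent quantity; your sketch treats the sensitivity pattern as ``a fresh sample from $F$'' without specifying the input distribution. For the specific models of this paper the issue is masked because Lemma \ref{lemma:01} makes $\Expec{s_F(\vect{w})}$ independent of $\vect{w}$, hence of $a$; but that is the paper's later observation, not part of Lynch's theorem, which is stated for arbitrary distributions $\set{p_i}$.
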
 
The above theorem shows that if $\lambda \leq 1$ almost all gates are freezing and ineffective and otherwise not. 
The next corollary gives us more information what happens if $\lambda > 1$:
\begin{corollary}[Lynch \cite{Lynch2005} Corollary 3 and Corollary 6]
Let $\lambda > 1$. For almost all random Boolean networks
\begin{enumerate}
\item if gate $v$ is not $\alpha \log N$-ineffective, there is a positive constant $W$ such that for
$t \leq \alpha \log N$, the number of gates affected by $v$ at time $t$ is asymptotic to $W\lambda^t$,
\item if gate $v$ is not freezing in $\alpha \log N$ steps , there is a positive constant $W$ such that for
$t \leq \alpha \log N$, the number of gates that affect $v$ at time $t$ is asymptotic to $W\lambda^t$.
\end{enumerate}
\end{corollary}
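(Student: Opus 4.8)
Since this statement is that of Lynch \cite{Lynch2005}, what there is to do in our setting is to check that his argument is insensitive to the connectivity $K$ and to the particular biased ensemble ($F^\text{fixed}_{K,p}$, $F_{K,p}$ or $F_K$); let me also sketch the argument itself, since it explains where $\lambda$ enters. The plan is to analyse the two quantities in the corollary --- the set of gates \emph{affected by} $v$ after $t$ steps (part 1) and the set of gates that \emph{affect} $v$ after $t$ steps (part 2) --- by coupling each, for $t \le \alpha\log N$, with the first $\alpha\log N$ generations of a supercritical Galton--Watson branching process of offspring mean $\lambda$.

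First I would pin down that mean. Call an in-edge of a gate $w$ \emph{relevant} at a given time if flipping the bit currently fed into the corresponding argument of $f_w$ changes $f_w$; in the backward exploration the ``children'' of a gate are exactly its relevant in-edges. Conditioned on the configuration $w$ sees, the expected number of relevant in-edges of $w$ is $\EX_F(\sen{f}{\vect{w}})$, which by Lemma \ref{lemma:01} is independent of that configuration and therefore equals $\EX_F(\avsen{f})$; this common value is Lynch's $\lambda$ (the identification, with the closed forms from Theorems \ref{theo:01}, \ref{theo:ExpectedSensitivityB} and \ref{theo:ExpectedSensitivity}, is made in the next section). In the forward exploration of part 1 the same mean reappears after one cancellation: a gate $w$ receives an edge from $v$ with probability $\approx K_w/N$ and that edge is relevant with probability $\approx \EX_{F_{K_w}}(\avsen{f})/K_w$, so the expected number of gates directly affected by $v$ is $\approx \sum_w \EX_{F_{K_w}}(\avsen{f})/N \to \EX_F(\avsen{f}) = \lambda$.

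Next comes the coupling. Exposing the random in-edges and functions only as the cone of $v$ is explored, one shows by a birthday estimate that as long as $o(\sqrt N)$ gates have been touched no gate is revisited with high probability, so the exploration is tree-like; the number of relevant children of each gate has mean $\lambda$ and, since it is dominated by the in-degree, finite second moment, inherited from the hypothesis $\sum_i p_i K_i^2 < \infty$. The conditions $2\alpha\log\delta + 2\beta < 1$ and $\alpha\log\delta < \beta$ with $\delta = \EX(K_i)$ are precisely what keeps the exploration of size $o(\sqrt N)$ and the failure probabilities small for all $t \le \alpha\log N$. Under the coupling the exploration agrees, up to $o(1)$ in probability, with a Galton--Watson process of mean $\lambda > 1$, and the Kesten--Stigum theorem (applicable because finite variance implies the $X\log X$ condition) gives $\lambda^{-t} Z_t \to W$ almost surely and in $L^1$, with $W > 0$ exactly on survival. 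Since for almost all networks ``$v$ is not $\alpha\log N$-ineffective'' (resp. ``$v$ does not freeze within $\alpha\log N$ steps'') corresponds to survival of the forward (resp. backward) process, on that event the count is asymptotic to $W\lambda^t$ for $t \le \alpha\log N$, as claimed.

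\textbf{The main obstacle} is neither the branching-process input nor the identification $\lambda = \EX_F(\avsen{f})$, both of which are short, but the coupling: one must control the dependence between which in-edges are relevant and the evolving state $\vect{B}^t(\vect{x})$, and show the tree-like approximation and birthday bound hold uniformly over $t \le \alpha\log N$. That bookkeeping is the substance of \cite{Lynch2005}; here it suffices to note that it does not see the connectivity or the choice of biased ensemble once $\lambda$ has been read off as the expected average sensitivity.
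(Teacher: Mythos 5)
The paper itself offers no proof of this corollary: it is imported verbatim from Lynch \cite{Lynch2005} (his Corollaries 3 and 6) and used as a black box, the paper's only original contribution being the subsequent identification $\lambda = \EX_{F}(\avsen{f})$ via Lemma \ref{lemma:01}. Your proposal ultimately does the same thing --- you defer the coupling and bookkeeping to Lynch --- but you add an expository reconstruction of his branching-process argument, and you correctly isolate the one point at which the generalisation to arbitrary $K$ and to the ensembles $F_K$, $F_{K,p}$, $F^\text{fixed}_{K,p}$ could fail: the offspring mean of the exploration process is a priori a weighted average $\sum_{\vect{w}} a^{\wH(\vect{w})}(1-a)^{K-\wH(\vect{w})}\,\EX_F(\sen{f}{\vect{w}})$ with respect to the non-uniform stationary bias $a$, and it collapses to the (uniform) expected average sensitivity only because $\EX_F(\sen{f}{\vect{w}})$ is independent of $\vect{w}$ for these ensembles --- exactly the content of Lemma \ref{lemma:01} and of the computation leading to Equation \eqref{eq:lamda3}. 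That is the same route the paper takes in Section \ref{sec:03}. Two small points you could make explicit to complete the ``hypothesis check'': for fixed connectivity $K$ one has $\sum_i p_i K_i^2 = K^2 < \infty$ and $\delta = \EX(K_i) = K$ trivially, so Lynch's standing assumptions hold; and your heuristic derivations (the forward/backward offspring means, the birthday bound, the passage from ``not ineffective''/``not freezing'' to survival of the corresponding process) remain sketches whose rigorous form lives entirely in \cite{Lynch2005} --- which is acceptable here precisely because the corollary is a cited result, not one the paper undertakes to reprove.
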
 

Now we will state the definition of $\lambda$ for Boolean networks:
\begin{dfn}[Lynch \cite{Lynch2005}, Definition 4]
\label{def:lambda}
Let $f$ be a Boolean function of $K$ arguments. For $i \in \set{1,\dots, K}$, we say that argument $i$ directly affects
$f$ on input $\vect{w} \in \GF^K$ if $f(\vect{w}) \neq f( \vect{w} \oplus \unitvect{i})$. Now put $\gamma(f,\vect{w})$ as the number of
$i$'s that directly affect $f$ on input $\vect{w}$. 
Given a constant $a \in [0,1]$, we define 
\begin{equation*}
\lambda = \sum_{i=1}^\infty p_i \sum_{\vect{w} \in \GF^{K_i} } \gamma(f_i,\vect{w})a^{\wH(\vect{w})}(1-a)^{K_i-\wH(\vect{w})}.
\end{equation*}
\end{dfn}
Obviously $\gamma(f,\vect{w})$ is identical to $s_f(\vect{w})$ which will be used instead
in the further discussion. 
The constant $a$ is the probability that a random gate is one (at infinite time) given that all gates
at time $0$ have probability $0.5$ of being one.
(see \cite[Definiton 2]{Lynch2005}).
Assume that we choose the functions according a random variable $F$ which should be either $F_K$, $F^\text{fixed}_{K,p}$ or $F_{K,p}$.
The functions are chosen out the set ${\cal F}_K$, we denote a function's probability with $p_f$.
It follows that 
\begin{align}
\lambda &= \sum_{\vect{w} \in \GF^{K} }a^{\wH(\vect{w})}(1-a)^{K-\wH(\vect{w})} \sum_f p_f s_f(\vect{w})\\
	&=  \sum_{\vect{w} \in \GF^{K} }a^{\wH(\vect{w})}(1-a)^{K-\wH(\vect{w})} \Expec{s_F(\vect{w})} \\
	&= \Expec{s_F(\vect{w})} \sum_{i=0}^K \binom{K}{i}a^{i}(1-a)^{K-i} \label{eq:lamda3}\\
	&= \Expec{s_F(\vect{w})} = \EX_{F}(\avsen{f})  
\end{align}
$\Expec{s_F(\vect{w})}$ denotes the expectation of the sensitivity for a fixed $\vect{w}$, 
Equation \eqref{eq:lamda3} follows from Lemma \ref{lemma:01}.
Therefore, together with Theorem \ref{theo:01} and Theorem \ref{cor:01} we proved
the following: 

\begin{theo}[Biased random Boolean networks]
For random Boolean networks, if
\begin{enumerate}
\item the functions are chosen according random variable $F_{K,p}$, it follows that
\begin{equation*}
\lambda =  2 K p (1-p),
\end{equation*}
\item the functions are chosen according random variable $F^\text{fixed}_{K,p}$, it follows that
\begin{equation*}
\lambda =  \frac{2^{K+1} K p (1-p)}{2^K-1}.
\end{equation*}
\end{enumerate}
\label{theo:02}
\end{theo}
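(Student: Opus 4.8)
The plan is to evaluate Lynch's parameter $\lambda$ straight from Definition \ref{def:lambda}, taking advantage of the structural simplification available here: in both ensembles of interest every gate function has exactly $K$ arguments, so the countable outer sum over the family $\{f_i\}$ in the definition of $\lambda$ collapses to a finite sum over ${\cal F}_K$ with weights $p_f$ coming from $F = F_{K,p}$ or $F = F^\text{fixed}_{K,p}$. First I would observe that $\gamma(f,\vect{w})$ is literally $\sen{f}{\vect{w}}$ and interchange the two (now finite) sums, writing
\begin{equation*}
\lambda = \sum_{\vect{w} \in \GF^K} a^{\wH(\vect{w})}(1-a)^{K-\wH(\vect{w})} \sum_{f \in {\cal F}_K} p_f\, \sen{f}{\vect{w}} = \sum_{\vect{w} \in \GF^K} a^{\wH(\vect{w})}(1-a)^{K-\wH(\vect{w})}\, \Expec{s_F(\vect{w})},
\end{equation*}
where $\Expec{s_F(\vect{w})}$ is the expected sensitivity at the fixed input $\vect{w}$.

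The decisive step is then to invoke Lemma \ref{lemma:01}: for $F = F_{K,p}$ or $F = F^\text{fixed}_{K,p}$ the number $\Expec{s_F(\vect{w})}$ is the same for every $\vect{w}$, hence it factors out of the sum and what remains is $\sum_{\vect{w} \in \GF^K} a^{\wH(\vect{w})}(1-a)^{K-\wH(\vect{w})}$. Grouping the $\vect{w}$'s by Hamming weight rewrites this as $\sum_{i=0}^{K} \binom{K}{i} a^{i}(1-a)^{K-i} = \bigl(a + (1-a)\bigr)^K = 1$ by the binomial theorem. Thus $\lambda = \Expec{s_F(\vect{w})}$, with the $a$-dependence completely gone; and since the average over $\vect{w}$ of this constant is again the constant, $\lambda = \EX_F(\avsen{f})$.

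Finally I would substitute the two previously established formulas for the expected average sensitivity: Theorem \ref{theo:ExpectedSensitivityB} yields $\lambda = \EX_{F_{K,p}}(\avsen{f}) = 2Kp(1-p)$, which is part 1, and Theorem \ref{cor:01} yields $\lambda = \EX_{F^\text{fixed}_{K,p}}(\avsen{f}) = \frac{2^{K+1}Kp(1-p)}{2^K-1}$, which is part 2. There is no real obstacle in the argument; the one point deserving emphasis is the dependence on Lemma \ref{lemma:01}, because it is precisely the input-independence of the expected sensitivity that causes the $a$-weighted average to telescope to $1$ — without it, $\lambda$ would genuinely vary with Lynch's constant $a$ and the clean identification $\lambda = \EX_F(\avsen{f})$ would fail.
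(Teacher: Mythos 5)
Your proposal is correct and follows essentially the same route as the paper: it rewrites $\lambda$ as an $a$-weighted sum of $\Expec{s_F(\vect{w})}$, uses Lemma \ref{lemma:01} to factor out that constant so the binomial sum collapses to $1$, and then substitutes the expected average sensitivities from Theorems \ref{theo:ExpectedSensitivityB} and \ref{cor:01}. Nothing is missing.
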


As a special case of the above theorem we get (or by using Theorem \ref{theo:01}) 
\begin{theo}[$NK$-Networks]
In random Boolean networks, where the functions are chosen according to the random variable $F_{K}$
\begin{equation*}
\lambda = \frac{K}{2}.
\end{equation*}
\end{theo}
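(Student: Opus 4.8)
The plan is to derive this as an immediate specialisation of the identity $\lambda = \EX_{F}(\avsen{f})$ that was just established for functions drawn from a single class $\mathcal{F}_K$ of fixed connectivity $K$. Recall that the displayed chain of equalities ending in $\lambda = \EX_{F}(\avsen{f})$ relied only on two facts: the binomial identity $\sum_{i=0}^K \binom{K}{i} a^{i}(1-a)^{K-i} = 1$, and Lemma \ref{lemma:01}, which guarantees that $\Expec{s_F(\vect{w})}$ is the same for every $\vect{w} \in \GF^K$ and hence equals the expected average sensitivity. Since Lemma \ref{lemma:01} explicitly includes $F = F_K$ among the admissible random variables, this reduction applies verbatim, and we obtain $\lambda = \EX_{F_K}(\avsen{f})$.

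First I would invoke Theorem \ref{theo:01} (Bernasconi), which states $\EX_{F_K}(\avsen{f}) = \tfrac{K}{2}$. Substituting this into $\lambda = \EX_{F_K}(\avsen{f})$ gives $\lambda = \tfrac{K}{2}$ at once. Alternatively, and matching the ``special case'' phrasing of the statement, one may start from Theorem \ref{theo:02}, item~1, which gives $\lambda = 2Kp(1-p)$ for functions chosen according to $F_{K,p}$, and note that $F_{K,1/2}$ coincides with $F_K$ (as observed when $F_{K,p}$ was defined); setting $p = \tfrac12$ yields $\lambda = 2K\cdot\tfrac12\cdot\tfrac12 = \tfrac{K}{2}$.

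There is essentially no obstacle: the result is a corollary of material already proved, and the proof is a single substitution. The only point I would take care to verify is that the passage from Definition \ref{def:lambda} to $\lambda = \EX_{F}(\avsen{f})$ — in particular the use of Lemma \ref{lemma:01} in Equation \eqref{eq:lamda3} — is legitimate for the uniform random function $F_K$; as noted above it is, so nothing delicate remains.
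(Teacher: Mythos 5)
Your proposal is correct and matches the paper's own reasoning: the paper derives this result exactly as a special case of Theorem \ref{theo:02} (setting $p=\tfrac12$ in $F_{K,p}$) or, equivalently, by combining the identity $\lambda = \EX_{F}(\avsen{f})$ with Theorem \ref{theo:01}. Both routes you give are the ones the paper indicates, and your check that Lemma \ref{lemma:01} covers $F_K$ is the only point that needs verifying.
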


\section{Discussion}
The results about $NK$-Networks are consistent with experimental results. In fact if $K\leq2$ 
almost all networks  almost all gates are freezing and
almost all gates are ineffective and otherwise not (see \cite{Kauffman1974}). 

Obviously, the border between the ordered and disordered phase is given by $\lambda = 1$.
The resulting phase diagram for biased random Boolean networks, where the functions
are chosen according to $F^\text{fixed}_{K,p}$ and $F_{K,p}$ is shown in Figure \ref{fig:01}.
It it interesting to note that if the functions are chosen with fixed bias, 
then also Boolean networks with connectivity $K=2$ can become unstable.
This conclusion can be drawn from Lynch's original result already. As mentioned in the 
introduction, he showed for $K=2$, that $\lambda > 1$ if the probability of choosing a non-constant non-canalizing function, namely 
the XOR or the inverted XOR function, is larger than the probability of choosing a constant function.
For example if the bias is $0.5$, the probability of choosing
a constant function is zero, whereas both XOR and inverted XOR function have probability greater zero,
hence $\lambda > 1$.

\begin{figure}[ht]
\centerline{{\includegraphics[width=0.8\columnwidth]{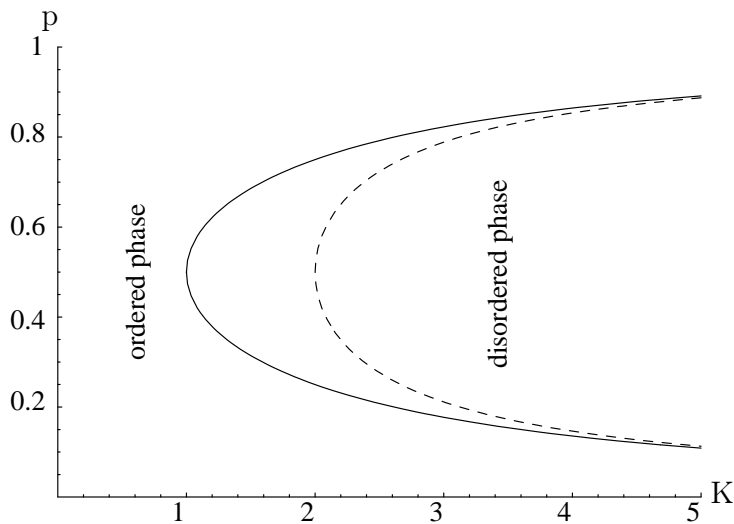}} }
\caption{Phase diagram for biased random networks: Functions chosen according $F_{K,p}$  (dashed) and $F^\text{fixed}_{K,p}$ (solid)}
\label{fig:01}
\end{figure}

It is interesting to compare our results with 
previous results obtained first by Derrida and Pomeau using the so called {\em annealed approximation}
(see \cite{Derrida1986}). In their \emph{annealed model} the functions and connections are chosen at random 
at each time step. Considering two instances of the same annealed network starting in two 
randomly chosen initial states $\vect{s}_1(0), \vect{s}_2(0)$ they show that 
\begin{equation*}
    \lim_{N \rightarrow \infty} \lim_{t \rightarrow \infty} \frac{\dH(\vect{s}_1(t), \vect{s}_2(t)) }{N} = c
\end{equation*}
where $c=1$ if 
\begin{equation*}
    2 K p (1-p) \leq 1 
\end{equation*}
and $c\leq 1$ otherwise. It is remarkable that the two models behave similar, but it 
is unclear whether this holds in general.

\section{Acknowledgement}
We would like to thank our colleges Georg Schmidt and Stephan Stiglmayr for proofreading and 
Uwe Schoening for useful hints.

\end{document}